\newcommand{\extra}[1]{}
\newcommand{\comment}[1]{}
\newtheorem{theorem}{Theorem}[section]
\newtheorem{conjecture}[theorem]{Conjecture}
\newtheorem{lemma}[theorem]{Lemma}
\theoremstyle{remark}
\newtheorem*{claim}{Claim}
\def\squareforqed{\hbox{\rlap{$\sqcap$}$\sqcup$}}
\def\qed{\ifmmode\squareforqed\else{\unskip\nobreak\hfil
\penalty50\hskip1em\null\nobreak\hfil\squareforqed
\parfillskip=0pt\finalhyphendemerits=0\endgraf}\fi}
\def\endenv{\ifmmode\;\else{\unskip\nobreak\hfil
\penalty50\hskip1em\null\nobreak\hfil\;
\parfillskip=0pt\finalhyphendemerits=0\endgraf}\fi}
\renewenvironment{proof}{\noindent \textbf{{Proof~} }}{\qed\medskip}
\newenvironment{proof+}[1]{\noindent \textbf{{Proof #1~} }}{\qed\medskip}
\mathchardef\ordinarycolon\mathcode`\:
\def\vcentcolon{\mathrel{\mathop\ordinarycolon}}
\newcommand{\nc}{\newcommand}
\nc{\rnc}{\renewcommand}
\nc{\beq}{\begin{equation}}
\nc{\eeq}{{\end{equation}}}
\nc{\beqa}{\begin{eqnarray}}
\nc{\eeqa}{\end{eqnarray}}
\nc{\lbar}[1]{\overline{#1}}
\nc{\proj}[1]{| #1\rangle\!\langle #1 |}
\nc{\avg}[1]{\langle#1\rangle}
\nc{\smfrac}[2]{\mbox{$\frac{#1}{#2}$}}
\nc{\tr}{\operatorname{tr}}
\nc{\tracedist}[1]{\Delta_{}\!\left( #1 \right)}
\nc{\fid}[1]{F\!\left( #1 \right)}
\nc{\ox}{\otimes}
\nc{\dg}{\dagger}
\nc{\dn}{\downarrow}
\nc{\cA}{{\cal A}}
\nc{\cB}{{\cal B}}
\nc{\cC}{{\cal C}}
\nc{\cD}{{\cal D}}
\nc{\cE}{{\mathcal E}}
\nc{\cF}{{\cal F}}
\nc{\cG}{{\cal G}}
\nc{\cH}{{\cal H}}
\nc{\cI}{{\cal I}}
\nc{\cJ}{{\cal J}}
\nc{\cK}{{\cal K}}
\nc{\cL}{{\cal L}}
\nc{\cM}{{\cal M}}
\nc{\cN}{{\cal N}}
\nc{\cO}{{\cal O}}
\nc{\cP}{{\cal P}}
\nc{\cR}{{\cal R}}
\nc{\cS}{{\cal S}}
\nc{\cT}{{\cal T}}
\nc{\cU}{{\cal U}}
\nc{\cV}{{\cal V}}
\nc{\cX}{{\cal X}}
\nc{\cZ}{{\cal Z}}
\nc{\entI}{{\bf I}}
\nc{\entIarrow}{{\bf I}^{\leftarrow}}
\nc{\entH}{{\bf H}}
\nc{\entS}{{\bf S}}
\nc{\entHmin}{\mathsf{H}_{\min}}
\nc{\aentHmin}{\hat{\mathbf{H}}_{\min}}
\nc{\entHmins}{\mathbf{H}_{\min}^{\varepsilon}} 
\nc{\supp}{\textrm{supp}}
\nc{\entF}{{\bf E}_f}
\nc{\isom}{\simeq}
\nc{\rank}{\operatorname{rank}}
\nc{\rar}{\rightarrow}
\nc{\lrar}{\longrightarrow}
\nc{\polylog}{\operatorname{polylog}}
\nc{\poly}{\operatorname{poly}}
\nc{\weight}{\textbf{w}}
\nc{\hamdist}{d_{H}}
\def\ve{\varepsilon}
\def\l{\lambda}
\nc{\Sp}{{{\mathbb S}}}
\nc{\RR}{{{\mathbb R}}}
\nc{\CC}{{{\mathbb C}}}
\nc{\FF}{{{\mathbb F}}}
\nc{\NN}{{{\mathbb N}}}
\nc{\ZZ}{{{\mathbb Z}}}
\nc{\PP}{{{\mathbb P}}}
\nc{\QQ}{{{\mathbb Q}}}
\nc{\UU}{{{\mathbb U}}}
\nc{\OO}{{{\mathbb O}}}
\nc{\EE}{{{\mathbb E}}}
\nc{\id}{{\operatorname{id}}}
\nc{\qubitchannel}{\id_2}
\nc{\bitchannel}{\overline{\id}_2}
\nc{\be}{\begin{equation}}
\nc{\ee}{{\end{equation}}}
\nc{\bea}{\begin{eqnarray}}
\nc{\eea}{\end{eqnarray}}
\nc{\Hom}[2]{\mbox{Hom}(\CC^{#1},\CC^{#2})}
\nc{\rU}{\mbox{U}}
\nc{\ob}[1]{#1}
\newcommand{\eqdef}	{:=}
\nc{\unif}{\textrm{unif}}
\nc{\circuit}{\textrm{circ}}
\nc{\haar}{\textrm{haar}}
\begin{document}

\sloppy

\title{On simultaneous min-entropy smoothing}

\author{
  \IEEEauthorblockN{Lukas Drescher}
  \IEEEauthorblockA{Institute for Theoretical Physics\\
    ETH Zuerich, Switzerland\\
    Email: lukasd@student.ethz.ch}
  \and
  \IEEEauthorblockN{Omar Fawzi}
  \IEEEauthorblockA{Institute for Theoretical Physics\\
    ETH Zuerich, Switzerland\\
    Email: ofawzi@phys.ethz.ch}
}



\maketitle


\begin{abstract}
In the context of network information theory, one often needs a multiparty probability distribution to be typical in several ways simultaneously. When considering quantum states instead of classical ones, it is in general difficult to prove the existence of a state that is jointly typical. Such a difficulty was recently emphasized and conjectures on the existence of such states were formulated.
 
In this paper, we consider a one-shot multiparty typicality conjecture. The question can then be stated easily: is it possible to smooth the largest eigenvalues of all the marginals of a multipartite state $\rho$ simultaneously while staying close to $\rho$? We prove the answer is yes whenever the marginals of the state commute. 
In the general quantum case, we prove that simultaneous smoothing is possible if the number of parties is two or more generally if the marginals to optimize satisfy some non-overlap property.
\end{abstract}

\section{Introduction}
It is natural in the context of studying information processing tasks to allow for a small error probability. This makes it possible to eliminate atypical behaviour of the system under consideration. When the state of a system is described by a probability distribution, an important quantity that arises in the analysis of information processing tasks is the largest probability. Events that happen with a probability that is atypically large can be discarded provided their total mass is smaller than the desired error probability. In order to optimize the rate of our task, one is then faced with an optimization over the choice of possible atypical sets. When the information processing task has multiple objectives, e.g., multiple receivers decoding the same message, there are several quantities to optimize. The question we consider here is how well can these different objectives be optimized simultaneously.

More concretely, consider a probability distribution $\rho$ on $m$ parties and fix some error tolerance $\ve > 0$. Each marginal has some largest probability. Given that an error probability $\ve$ is allowed, it is possible to discard atypical sets of weight at most $\ve$ in order to reduce the largest probability. This could be done separately for each marginal. Now is it possible to find a state on $m$ parties  that is still reasonably close to the original state $\rho$ but that is as good as the specific optimizers \emph{for all the marginals simultaneously}? As the optimization in this setting refers to eliminating atypical behaviour, we also refer to the process as ``smoothing''. 
For quantum systems, the distribution is replaced by a positive semidefinite operator whose eigenvalues correspond to probabilities. The operation of taking a marginal corresponds to a partial trace. In this quantum framework, a (classical) probability distribution is represented as an operator with a particular eigenbasis.

One motivation for considering such a question is that it poses significant obstacles in the context of quantum network information theory as was recently emphasized in the study of multiparty state merging \cite{Dut11} and the study of the quantum interference channel \cite{FHSSW11, Sen11}.

The purpose of this paper is to formulate the questions that arose from these works in a one-shot setting. We provide a proof of the conjecture when certain commutation relations between the marginals of the state hold. We also give a proof for the two-party quantum case and when the marginals to optimize are ``non-overlapping''. These seem to be the cases that can be handled using the current techniques and we believe that new techniques are needed to prove the general case. We hope this work will raise interest in the conjecture and its cousins.

\section{Preliminary work}
\label{preliminary}


\subsection{Basic notation}

The state of an isolated quantum system is represented by a unit vector in a Hilbert space. Quantum systems are denoted $A, A_1, A_2, \dots$ and are identified with their corresponding Hilbert spaces. We write $d_A \eqdef \dim A$. To describe a distribution $\{p_1, \dots, p_r\}$ over quantum states $\{\ket{\psi_1}, \dots, \ket{\psi_r}\}$ (also called a mixed state), we use a density operator $\rho = \sum_{i=1}^r p_i \proj{\psi_i}$. Here, $\proj{\psi}$ refers to the projector on the complex line spanned by $\ket{\psi}$. A density operator is a positive semidefinite operator with unit trace. 
Let $\cP(A)$ be the set of positive semidefinite operators acting on $A$. Then $\cS(A) \eqdef \{ \rho \in \cP(A) : \tr \rho = 1\}$ is the set of density operators on $A$. The Hilbert space on which a density operator $\rho \in \cS(A)$ acts is sometimes denoted by a subscript, as in $\rho_A$. Superscripts are only used for labelling. In order to describe the state of a composite system $A_1A_2$, we use the tensor product Hilbert space $A_1 \otimes A_2$, which is sometimes simply denoted $A_1A_2$.  If $\rho_{A_1A_2}$ describes the joint state on $A_1A_2$, the reduced state on the system $A_1$ is obtained by the partial trace $\rho_{A_1} \eqdef \tr_{A_2} \rho_{A_1A_2}$. 

The evolution of any quantum system can be represented by a trace preserving completely positive map (TPCPM) $\cE_{A \to C}$. A map is called positive if for any positive operator $\rho$, $\cE(\rho)$ is also positive. It is called completely positive if for any quantum system $B$, the map $\cE \otimes \id_B$ is positive. 
For a map $\cE$ acting on system $A_1$, we sometimes drop an identity acting on another system, as in $\cE(\rho_{A_1A_2}) = (\cE \otimes \id_{A_2})(\rho_{A_1A_2})$.
For an introduction to quantum information, we refer the reader to \cite{NC00, Wil11}.

\subsection{Distance measures}
We use two distance measures based on extensions of the trace distance and the fidelity to subnormalized states,
$\cS_{\leq}(A) \eqdef \{\rho \in \cP(A) : \tr \rho \leq 1\}$.
For subnormalized states, we define quantum evolutions as trace non-increasing completely positive maps. 

Let $\rho, \sigma \in \cS_{\leq}(A)$ be subnormalized density operators. The \emph{trace distance} is defined as
\[
D(\rho, \sigma) \eqdef \frac{1}{2} \| \rho - \sigma \|_1 + \frac{1}{2} | \tr( \rho - \sigma ) |,
\]
where $\| M \|_1 = \tr\left(\sqrt{M^{\dagger} M}\right)$. Another metric that is more commonly used in this context is the \emph{purified distance} \cite{TCR10},
\[
P(\rho, \sigma) \eqdef \sqrt{1-F(\rho,\sigma)^2},
\]
based on the \emph{generalized fidelity}, which is given by
\[
F(\rho, \sigma) \eqdef \| \sqrt{\rho} \sqrt{\sigma} \|_1 + \sqrt{(1 - \tr \rho)(1 - \tr \sigma)}.
\]
The two distance measures are related by
\begin{equation}
\label{eq:trace-purified}
D(\rho,\sigma) \leq P(\rho,\sigma) \leq \sqrt{2 D(\rho, \sigma)}.
\end{equation}
For the trace distance, the closed $\ve$-ball around $\rho$ is denoted by $B_{\ve}^D(\rho)$ and for the purified distance by $B_{\ve}^P(\rho)$. Quantum evolutions are non-expansive maps in both the trace distance and the purified distance. That is, for any trace non-increasing completely positive map $\cE$, we find
\begin{equation}
\begin{split}
D(\cE (\rho), \cE(\sigma)) &\leq D(\rho, \sigma), \\
P(\cE (\rho), \cE(\sigma)) &\leq P(\rho, \sigma).
\end{split}
\label{eq:monpd}
\end{equation}

\subsection{Min-entropy}
\label{minEntropy}
Let $\rho \in \cS_{\leq}(A)$. The min-entropy of the state $\rho$ is defined as
$\entHmin(A)_{\rho} \eqdef - \log \l_{\max}(\rho)$,
where $\l_{\max}(\rho)$ denotes the largest eigenvalue of $\rho$. Optimizing this quantity over an $\ve$-neighbourhood of $\rho$, we obtain the smooth min-entropy, 
\begin{equation}
\label{eq:smoothMinEntropy}
\begin{split}
\entHmin^{\varepsilon,X}(A)_{\rho} &\eqdef \max_{\sigma \in B_{\ve}^X(\rho)} \entHmin(A)_{\sigma} \\ &= - \log \left( \min_{\sigma \in B_{\ve}^X(\rho)} \l_{\max}(\sigma) \right)
\end{split}
\end{equation}
where $X$ can be set to either $D$ for trace-distance or $P$ for purified distance. As the purified distance is more common in this setting, we drop the superscript $P$ when using it. Since $B_{\varepsilon}^X(\rho) \subset \cS(A)$ is compact, the maximum in (\ref{eq:smoothMinEntropy}) is achieved by a state $\sigma \in B_{\varepsilon}^X(\rho)$,
\begin{equation}
\entHmin(A)_{\sigma} = \entHmin^{\varepsilon,X}(A)_{\rho}.
\label{eq:minEntropySmoother}
\end{equation}
The state $\sigma$ can always be assumed to share a particular eigenbasis $\{ \ket{x} \}_x$ with $\rho$. \cite{mt} This follows from the fact that a measurement of $\sigma$ in this basis, $\cE: \sigma \mapsto \sum_x \proj{x} \sigma \proj{x}$, cannot increase the largest eigenvalue of $\sigma$. Since $\cE(\rho) = \rho$ we find by (\ref{eq:monpd}) that $\cE(\sigma) \in B_{\ve}^X(\rho)$. 
As a consequence when only considering a single system $A$, the optimization problem in \eqref{eq:smoothMinEntropy} is classical in the sense that we can always restrict it to states that share a particular eigenbasis with $\rho$.

\section{Conjecture}
With this basic notation we state the main conjecture.
\begin{conjecture}
\label{conj:ssc}
For any number of parties $m \in \mathbb{N}$ there exists a function $g_m$ with $\lim\limits_{\ve \to 0} g_m(\ve) = 0$ such that the following holds.

For any state $\rho \in \mathcal{S}_{\leq}(A)$ on any $m$-party system $A = A_1 \cdots A_m$, there exists a state $\sigma \in B_{g_m(\ve)}^{P}(\rho)$ that satisfies
\[
\entHmin(S)_{\sigma} \geq \entHmin^{\varepsilon}(S)_{\rho}, \; \forall S \subset \{A_1,\dotsc,A_m\}, \; S \neq \emptyset.
\]
\end{conjecture}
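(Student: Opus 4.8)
The plan is to build the global state by gluing together the well-understood single-system smoothings, and to isolate exactly where the gluing breaks. For each nonempty $S\subset\{A_1,\dots,A_m\}$ set $\mu_S\eqdef 2^{-\entHmin^{\varepsilon}(S)_{\rho}}=\l_{\max}(\sigma^S)$, where $\sigma^S\in B_{\varepsilon}^{P}(\rho_S)$ is an optimal smoothing of the marginal $\rho_S$. As recalled after \eqref{eq:minEntropySmoother}, $\sigma^S$ may be taken diagonal in the eigenbasis of $\rho_S$, so it is the spectral ``capping'' of $\rho_S$ at the level $\mu_S$; concretely $\sigma^S=F_S\rho_S F_S$ for a diagonal contraction $0\le F_S\le I$ that fixes eigenvalues below $\mu_S$ and pushes larger ones down to $\mu_S$. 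I would package this as the trace-non-increasing completely positive map $T_S(\cdot)=F_S(\cdot)F_S$ on $\bigotimes_{j\in S}A_j$, which by construction satisfies $\l_{\max}(T_S(\rho_S))\le\mu_S$ and $P(T_S(\rho_S),\rho_S)\le\varepsilon$.

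The candidate global state is $\sigma\eqdef\big(\textstyle\prod_S T_S\big)(\rho)$, each $T_S$ read as $T_S\otimes\id$ on the complementary systems. Controlling the distance $P(\sigma,\rho)$ is the easy half: each map removes weight at most $\varepsilon$ from $\rho$, so by the gentle-measurement lemma it perturbs the state by an amount tending to $0$ with $\varepsilon$, and the triangle inequality over the at most $2^m-1$ maps gives $P(\sigma,\rho)\le g_m(\varepsilon)$ with $g_m(\varepsilon)\to0$. The whole difficulty is therefore to certify that $\l_{\max}(\sigma_S)\le\mu_S$ holds for \emph{every} $S$ at once, after all the maps have been applied.

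This is where commutation is decisive. Suppose first that the lifted maps $\{T_S\otimes\id\}_S$ pairwise commute; this is guaranteed when the marginals commute, and also in the non-overlapping case, where the optimized subsets are pairwise disjoint so the $T_S$ act on distinct tensor factors. Fix $S$ and write $\sigma=T_S\otimes\id(\tilde\rho)$ with $\tilde\rho=\prod_{S'\neq S}T_{S'}\otimes\id\,(\rho)$. Since the other maps act on factors outside $S$, the elementary identity $\tr_{B}\!\big[(I\otimes F)X(I\otimes F)\big]\preceq X_A$, valid for any contraction $0\le F\le I$ and any $X\succeq0$, applied with $F=\prod_{S'\neq S}F_{S'}$ gives $\tilde\rho_S\preceq\rho_S$. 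Operator monotonicity of conjugation then yields $\sigma_S=F_S\tilde\rho_S F_S\preceq F_S\rho_S F_S=\sigma^S$, whence $\l_{\max}(\sigma_S)\le\mu_S$. As this holds for all $S$ simultaneously, the conjecture follows in the commuting and non-overlapping cases. The two-party case $m=2$ needs an extra argument to also absorb the joint constraint on $A_1A_2$ (which overlaps both single-party marginals), but with only these few constraints the interference can still be handled directly.

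The main obstacle is the general non-commuting case. Once two optimized subsets overlap, $T_S\otimes\id$ and $T_{S'}\otimes\id$ act nontrivially on a shared factor and no longer commute, and the chain above collapses: the step $\tilde\rho_S\preceq\rho_S$ fails, so applying $T_{S'}$ can re-concentrate the already-capped spectrum of $\sigma_S$ and drive $\l_{\max}(\sigma_S)$ back above $\mu_S$. The perturbations now interfere quantum-mechanically rather than merely accumulating in distance, and neither fixing an order of application nor averaging over orders obviously prevents the re-concentration. I expect this to be precisely the point the present capping/projection technique cannot reach, and that certifying all marginals \emph{simultaneously} flattened will require a genuinely different construction of $\sigma$ — for instance a variational or convex-duality argument that treats every marginal on an equal footing rather than sequentially.
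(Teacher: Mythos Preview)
The statement is a \emph{conjecture}; the paper does not prove it in general, only the special cases (classical/commuting marginals, two parties, non-overlapping subsystems), and you likewise do not claim a full proof but rather identify the same obstruction. In that sense your proposal matches the paper's scope and overall strategy: apply the single-system spectral cappings $T_S$ in sequence, control the distance by triangle inequality plus monotonicity, and check the marginal eigenvalue bounds separately.

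That said, the argument you give for the tractable cases has a real gap. You write ``since the other maps act on factors outside $S$, the elementary identity $\tr_B[(I\otimes F)X(I\otimes F)]\preceq X_A$ gives $\tilde\rho_S\preceq\rho_S$''. This is only valid when the other $S'$ are \emph{disjoint} from $S$; it is false in the commuting/classical case, where $S'$ may overlap $S$ and $T_{S'}$ acts nontrivially on factors inside $S$. The paper's proof does not use that identity here: in the classical case it instead uses the pointwise operator-decreasing property $\bar\cE^{S'}(\tau)\le\tau$ for diagonal states (their Lemma~\ref{lem:classicalCase}\,ii)), which immediately gives $\tilde\rho\le\rho$ globally and hence $\tilde\rho_S\le\rho_S$ after partial trace. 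Your argument as written therefore does not cover the commuting case, even though you list it.

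A second, smaller discrepancy: your ``non-overlapping'' means pairwise disjoint, whereas the paper's non-overlapping condition \eqref{eq:non-overlapping} also permits \emph{nested} subsystems ($S\subset T$). For nested sets your displayed identity again fails, and the paper handles this with an ordering (largest subsystem first) together with submultiplicativity of $\|\cdot\|_\infty$ for the outer maps and your locality identity only for the genuinely disjoint ones. With those two fixes your outline would coincide with the paper's partial proofs; your diagnosis of the three-party overlapping obstruction is exactly the one the paper reaches.
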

The function $g_m$ can depend on the number of parties $m$ but it is important that it is independent of the physical realization. In particular, it must not depend on the dimensions of the systems $A_1, \dots, A_m$. Furthermore, note that by relation \eqref{eq:trace-purified} the conjecture can equivalently be restated for the trace distance.

This conjecture is a generalization of the multiparty typicality conjecture of \cite{Dut11} to general states that are possibly not tensor powers. As such, by an application of the asymptotic equipartition property \cite{TCR09}, Conjecture \ref{conj:ssc} directly implies the multiparty typicality conjecture. One could of course consider stronger versions of this conjecture and ask for the conditional entropies to be also simultaneously smoothed. And in fact, the simultaneous decoding conjecture in \cite{FHSSW11} would follow from an analogous conjecture with conditional entropies. As difficulties already arise without conditioning, we focus on this simple setting here.

\section{Min-entropy smoothing}
\label{minEntropySmoothing}
In the following, we give an explicit formula for a state $\sigma \in B_{\ve}^D(\rho)$ that satisfies (\ref{eq:minEntropySmoother}) for $X = D$. This expression is then used to define smoothing as a quantum operation and to restate Conjecture \ref{conj:ssc} from a different perspective.

\begin{lemma}[Min-entropy smoothing, \cite{Dre13}]
\label{lem:minEntropySmoothing}
Let $\rho \in \cS_{\leq}(A)$, $\ve > 0$. Define the function
\begin{displaymath}
\tilde{f}_{\ve}(x) = \left\{ \begin{array}{ll}
2^{-\entHmin^{\ve,D}(A)_{\rho}}, \, &x > 2^{-\entHmin^{\ve,D}(A)_{\rho}} \\
x, \, &x \leq 2^{-\entHmin^{\ve,D}(A)_{\rho}}.
\end{array}  \right.
\end{displaymath}
Then the state $\sigma \eqdef \tilde{f}_{\ve}(\rho) \in B_{\ve}^{D}(\rho)$ satisfies \eqref{eq:minEntropySmoother} for $X = D$.
\end{lemma}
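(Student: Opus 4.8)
The plan is to check directly that the explicit truncation $\sigma \eqdef \tilde f_\ve(\rho)$ meets the two requirements. Abbreviate $\mu \eqdef 2^{-\entHmin^{\ve,D}(A)_\rho} = \min_{\tau \in B_\ve^D(\rho)} \lambda_{\max}(\tau)$, let $\lambda_1 \ge \lambda_2 \ge \cdots$ be the eigenvalues of $\rho$, and set $h(t) \eqdef \sum_i \max(\lambda_i - t,\, 0)$. By construction $\sigma$ is diagonal in an eigenbasis of $\rho$ with eigenvalues $\min(\lambda_i, \mu)$, so $\sigma$ commutes with $\rho$ and $\rho - \sigma \ge 0$, with spectrum $\max(\lambda_i - \mu, 0)$. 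Consequently $\|\rho - \sigma\|_1 = \tr(\rho - \sigma) \ge 0$ and the trace-correction term coincides with it, so
\[
D(\rho, \sigma) = \tfrac12 \|\rho - \sigma\|_1 + \tfrac12 |\tr(\rho - \sigma)| = \tr(\rho - \sigma) = h(\mu).
\]
It therefore suffices to prove that (i) $\lambda_{\max}(\sigma) = \mu$, so that $\sigma$ is an optimizer, and (ii) $h(\mu) \le \ve$, so that $\sigma \in B_\ve^D(\rho)$.

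Claim (i) is immediate: $\rho$ itself lies in $B_\ve^D(\rho)$, hence $\mu \le \lambda_{\max}(\rho) = \lambda_1$, and therefore $\lambda_{\max}(\sigma) = \min(\lambda_1, \mu) = \mu$. This gives $\entHmin(A)_\sigma = -\log \mu = \entHmin^{\ve,D}(A)_\rho$, which is precisely \eqref{eq:minEntropySmoother} for $X = D$.

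The real content is claim (ii), i.e.\ that capping the spectrum at $\mu$ is the cheapest way to force the largest eigenvalue down to $\mu$. I would extract this from an optimizer: by \eqref{eq:minEntropySmoother} there is a state $\sigma^\star \in B_\ve^D(\rho)$ with $\lambda_{\max}(\sigma^\star) = \mu$, and by the eigenbasis remark of Section~\ref{minEntropy} it may be taken diagonal in the eigenbasis of $\rho$, with eigenvalues $\mu_i \le \mu$. Writing $x_i = \lambda_i - \mu_i$, the generalized trace distance between the commuting operators $\rho$ and $\sigma^\star$ is
\[
D(\rho, \sigma^\star) = \max\Bigl( \sum_{i : x_i > 0} x_i,\; \sum_{i : x_i < 0} |x_i| \Bigr) \ge \sum_{i : \lambda_i > \mu} (\lambda_i - \mu) = h(\mu),
\]
where the bound drops the second argument of the maximum and then restricts to the indices with $\lambda_i > \mu$, for which $x_i = \lambda_i - \mu_i \ge \lambda_i - \mu > 0$. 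Since $D(\rho, \sigma^\star) \le \ve$, we obtain $h(\mu) \le \ve$, and combined with the first paragraph this yields $D(\rho, \sigma) = h(\mu) \le \ve$, finishing the proof.

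The only step with any substance is the lower bound in (ii); everything else is bookkeeping about commuting spectra. Its proof rests entirely on two facts already supplied in the excerpt — that the smooth min-entropy is attained by some $\sigma^\star$ and that this optimizer can be assumed to share the eigenbasis of $\rho$ — so the argument collapses to the one-line classical computation above. The one point to watch is the trace-correction term in the subnormalized distance $D$, but the commuting evaluation handles it automatically. (A basis-independent variant of (ii) tests $D(\rho, \sigma^\star) \ge \tr\bigl(Q(\rho - \sigma^\star)\bigr)$ against the spectral projector $Q$ of $\rho$ onto eigenvalues exceeding $\mu$, using $\tr(Q \sigma^\star) \le \mu\,\tr Q$.)
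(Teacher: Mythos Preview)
Your argument is correct: the two claims you isolate are exactly what is needed, and your computation of the generalized trace distance between commuting subnormalized states as $\max\bigl(\sum_{x_i>0} x_i,\ \sum_{x_i<0}|x_i|\bigr)$ is the right identity here, yielding the clean lower bound $D(\rho,\sigma^\star)\ge h(\mu)$ from any optimizer $\sigma^\star$.

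The paper itself does not prove this lemma; it is stated with a citation to \cite{Dre13} and used as a black box, so there is no in-paper argument to compare against. Your proof supplies precisely the missing justification, relying only on the two facts the paper does establish in Section~\ref{minEntropy} (attainment of the optimum and the eigenbasis-sharing remark), so it slots in without any additional machinery.
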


Using this Lemma, we define $\entHmin$-smoothing as a quantum operation. Concisely, we realize it as a multiplication operator on the eigenvalues $\{ \l_i \}_i$ of the state $\rho \in \cS_{\leq}(A)$, mapping $\l_i$ to $f_{\ve}(\l_i) \l_i$. The smoothing function, $f_{\ve}(x) \eqdef \frac{\tilde{f}_{\ve}(x)}{x}$ for $x \in (0,1]$, $f_{\ve}(0) \eqdef 1$, is chosen according to Lemma \ref{lem:minEntropySmoothing}. Since $f_{\ve} \leq 1$, we can represent this map as a quantum operation on $\cS_{\leq}(A)$,
\begin{equation}
\cE: \tau \mapsto \sqrt{f_{\ve}}(\rho) \tau \sqrt{f_{\ve}}(\rho).
\label{eq:smoothingChannel}
\end{equation}
Note that this map is also a feasible smoothing operation for $P$ due to $\entHmin^{\ve,P}(A)_{\rho} \leq \entHmin^{\ve,D}(A)_{\rho}$ by (\ref{eq:trace-purified}). For the distance we then find $P(\rho, \cE(\rho)) \leq \sqrt{2 \varepsilon}$.

On a multiparty system $A = A_1 \cdots A_m$, this construction can be repeated on every subsystem $S$. For $\ve > 0$, we define a smoothing operation $\cE^S$ by
\begin{equation}
\cE^S: \begin{array}{ccc} \cS_{\leq}(S) & \to & \cS_{\leq}(S) \\ \tau_S & \mapsto & \sqrt{f_{\ve}^S}(\rho_S) \tau_S \sqrt{f_{\ve}^S}(\rho_S),
\end{array}
\label{eq:smoothingChannelSubsystem} 
\end{equation}
where the smoothing function $f_{\ve}^S$ is defined in terms of $\rho_S$. Conjecture \ref{conj:ssc} can then be restated as follows: can we construct a global quantum evolution $\cE: S_{\leq}(A) \to S_{\leq}(A)$ from the marginal smoothing operations $\{ \cE^{S} \}_{S \subset \{A_1, \dotsc, A_m \}, S \neq \emptyset}$ that simultaneously smooths all min-entropies of $\rho$ keeping $\rho$ close to itself? 

\section{Classical case}


We show that classical states admit a natural solution of Conjecture \ref{conj:ssc} from the perspective of quantum evolutions. In particular, the smoothing operations $\{ \cE^S \}_{S \in \cK}$ \eqref{eq:smoothingChannelSubsystem} for the subsystems, once extended appropriately to the total system, can be combined to define an iterative simultaneous $\entHmin$-smoothing operation $\cE$. This result is stated in Theorem \ref{thm:classicalCase}. Furthermore, we provide a distribution showing the optimality of the obtained trace distance bound.

Let $A = A_1 \cdots A_m$ be a classical system. A classical state $\rho \in \cS_{\leq}^{cl}(A)$ is characterized by its product eigenbasis,
\begin{equation}
\rho = \sum_{i_1 = 1}^{d_{A_1}} \cdots \sum_{i_m = 1}^{d_{A_m}} p_{i_1...i_m} \proj{i_1}_{A_1} \otimes \cdots \otimes \proj{i_m}_{A_m},
\label{eq:classicalstate}
\end{equation}
where $\{\ket{i_k}\}_{1 \leq i_k \leq d_{A_k}}$ denotes an orthonormal basis of $A_k$ for all $k \in \{1,\dotsc, m\}$.  Note that the structure of the classical state $\rho$ implies that its closest simultaneous $\entHmin$-smoother $\sigma$ can always be assumed to be classical. This follows from the fact that a measurement of $\sigma$ in the classical eigenbasis of $\rho$ can increase neither the largest eigenvalue of any of its reduced states nor the distance to $\rho$. Therefore, Conjecture \ref{conj:ssc} has a well-defined classical limit.



Extending the smoothing maps $\{ \cE^{S} \}_{S}$ defined in \eqref{eq:smoothingChannelSubsystem} to act globally by $\bar{\cE}^S \eqdef \cE^S \otimes \id_{S^c}$, so that $\cE^S(\rho_S) = \tr_{S^c}(\bar{\cE}^S (\rho))$, we observe the following properties. 



\begin{lemma}
Let $\cK = 2^{\{ A_1, \dotsc, A_m \}} \setminus \{ \emptyset\}$. For a classical state $\rho \in S^{cl}_{\leq}(A)$ the extended smoothing operations $\{ \bar{\cE}^S \eqdef  \cE^S \otimes \id_{S^c} \}_{S \in \cK}$, $\cE^{S}$ as in (\ref{eq:smoothingChannelSubsystem}) for all $S \in \cK$, are
\begin{enumerate}[i)]
\item commutative: $\bar{\cE}^S \circ \bar{\cE}^T = \bar{\cE}^T \circ \bar{\cE}^S$ $\forall S, T \in \cK$,
\item density operator decreasing: $\bar{\cE}^S(\tau) \leq \tau$ $\forall \tau \in S_{\leq}^{cl}(A)$,
\item distance preserving: $D(\tau, \bar{\cE}^S(\tau) ) = D(\tau_S, \cE^S(\tau_S) )$ $\forall \tau \in S_{\leq}^{cl}(A)$.
\end{enumerate}
\label{lem:classicalCase}
\end{lemma}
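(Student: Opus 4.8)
The plan is to exploit the fact that, for a classical $\rho$, every operator appearing in the three claims is simultaneously diagonal in the product basis $\{\ket{i_1}\otimes\cdots\otimes\ket{i_m}\}$ of \eqref{eq:classicalstate}, which reduces all three statements to elementary scalar facts. First I would record the shape of the extended maps. Since $\rho$ is classical, each marginal $\rho_S$ is diagonal in the product basis of the subsystems of $S$, so the positive operator $M_S \eqdef \sqrt{f_\ve^S}(\rho_S)\otimes\1_{S^c}$ is diagonal in the \emph{full} product basis, with diagonal entries lying in $[0,1]$ (because $0\le f_\ve^S\le 1$) and depending only on the coordinates of the index that lie in $S$. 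Hence $\bar{\cE}^S(\tau)=M_S\,\tau\,M_S$ with $M_S$ diagonal and $0\le M_S\le\1$.

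For part (i), since $M_S$ and $M_T$ are both diagonal in the common product basis they commute, $M_SM_T=M_TM_S$, and therefore $\bar{\cE}^S\circ\bar{\cE}^T(\tau)=M_SM_T\,\tau\,M_TM_S=M_TM_S\,\tau\,M_SM_T=\bar{\cE}^T\circ\bar{\cE}^S(\tau)$; this argument uses nothing about $\tau$, so commutativity holds on all operators. For part (ii), I would use that a classical $\tau$ is itself diagonal in the product basis and therefore commutes with $M_S$, so that $\tau-\bar{\cE}^S(\tau)=(\1-M_S^2)\,\tau$. As $\1-M_S^2\ge 0$ and $\tau\ge 0$ are simultaneously diagonal, their product is positive semidefinite, which gives $\bar{\cE}^S(\tau)\le\tau$.

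For part (iii), the positivity from (ii) does the work. Because $\tau-\bar{\cE}^S(\tau)\ge 0$, the two terms defining $D$ coincide and $D(\tau,\bar{\cE}^S(\tau))=\tr(\tau-\bar{\cE}^S(\tau))$. Since $\bar{\cE}^S=\cE^S\otimes\id_{S^c}$, one has $\tr_{S^c}\bar{\cE}^S(\tau)=\cE^S(\tau_S)$ and $\tr_{S^c}\tau=\tau_S$, so partial tracing the positive operator $\tau-\bar{\cE}^S(\tau)$ over $S^c$ yields the positive operator $\tau_S-\cE^S(\tau_S)$, whence $D(\tau_S,\cE^S(\tau_S))=\tr(\tau_S-\cE^S(\tau_S))$ by the same simplification. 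Finally $\tr(\tau_S-\cE^S(\tau_S))=\tr\,\tr_{S^c}(\tau-\bar{\cE}^S(\tau))=\tr(\tau-\bar{\cE}^S(\tau))$, so the two trace distances are equal.

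None of these steps requires a heavy computation; the only place that genuinely invokes the classical hypothesis is the simultaneous diagonalizability of $\tau$ with the $M_S$, and this is exactly what I expect to be the point to be careful about. For a general (non-diagonal) $\tau$ the operator inequality $M_S\tau M_S\le\tau$ can fail even when $0\le M_S\le\1$, and the trace-distance simplification in (iii) would break down; only commutativity (i) survives for arbitrary states. So the main obstacle is not difficulty but discipline: classicality must be used precisely where operator statements are collapsed to scalar ones in (ii) and (iii).
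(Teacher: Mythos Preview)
Your proof is correct and follows essentially the same approach as the paper: all three parts reduce to the simultaneous diagonalizability of the operators $M_S$ and $\tau$ in the classical product basis, with (iii) using that $\tau\ge\bar{\cE}^S(\tau)$ collapses the generalized trace distance to $\tr(\tau-\bar{\cE}^S(\tau))$, which is invariant under passing to the marginal on $S$. The only cosmetic difference is that the paper verifies (ii) first on product basis states and extends by linearity, whereas you write $(\1-M_S^2)\tau\ge0$ directly; these are the same argument.
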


\begin{proof}
i) The operators $\sqrt{f^S}(\rho_S) \otimes \1_{S^c}$, $\sqrt{f^T}(\rho_T) \otimes \1_{T^c}$ commute since $\rho_S \otimes \1_{S^c}$ and $\rho_T \otimes \1_{T^c}$ are simultaneously diagonalizable in the classical eigenbasis of $\rho$ for all $S, T \in \cK$. 

ii) This property holds for pure states $\tau = \proj{i_1}_{A_1} \otimes \cdots \otimes \proj{i_m}_{A_m}$ that span $\cS_{\leq}^{cl}(A)$. For $S = A_{r_1} \cdots A_{r_{|S|}}$ we have
\[
\bar{\cE}^S(\tau) = \underbrace{f^S(\bra{i_{r_1},\dotsc,i_{r_{|S|}}} \rho_S \ket{i_{r_1},\dotsc,i_{r_{|S|}}})}_{\leq 1} \tau \leq \tau.
\]
By linearity of $\bar{\cE}^S$ this statement extends to all of $ \cS_{\leq}^{cl}(A)$.

iii) The trace distance simplifies to the trace for ordered density operators,
\begin{equation}
\tau, \omega \in \cS_{\leq}(A), \tau \geq \omega: D(\tau, \omega) = \tr(\tau - \omega),
\label{eq:gmtd}
\end{equation}
and, therefore, is independent of the subsystem where it is evaluated.
\end{proof}

By Lemma \ref{lem:classicalCase} the smoothing operations $\{ \cE^S \}_{S}$ from \eqref{eq:smoothingChannelSubsystem} can be globally combined in a compatible way giving rise to
\begin{theorem}[Classical case of Conjecture \ref{conj:ssc}]
Let $\rho \in \cS_{\leq}^{cl}(A)$, $\cK \subset 2^{\{A_1,...A_m\}} \setminus \{ \emptyset \}$, $\varepsilon > 0$. There exists a state $\sigma \in \cS_{\leq}^{cl}(A)$ that satisfies
\begin{eqnarray}
&\entHmin(S)_{\sigma} &\geq \entHmin^{\varepsilon,D}(S)_{\rho} \text{ } \forall S \in \cK  \label{eq:ccss} \\
&D(\rho,\sigma) &\leq |\cK| \varepsilon. \label{eq:comdist}
\end{eqnarray}
In general, the bound \eqref{eq:comdist} is optimal in the limit of large dimensions $\min\limits_{1 \leq i \leq m} d_{A_i}$. 
\label{thm:classicalCase}
\end{theorem}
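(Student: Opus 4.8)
The plan is to build the simultaneous smoother as a single composition of all the marginal operations and to read off the two claims of the theorem directly from the three properties established in Lemma~\ref{lem:classicalCase}. Enumerate $\cK = \{S_1,\dots,S_n\}$ with $n = |\cK|$, set $\tau_0 \eqdef \rho$ and $\tau_k \eqdef \bar{\cE}^{S_k}(\tau_{k-1})$, and take $\sigma \eqdef \tau_n = \bar{\cE}^{S_n}\circ\cdots\circ\bar{\cE}^{S_1}(\rho)$. By commutativity (property~i) the resulting $\sigma$ does not depend on the chosen enumeration of $\cK$, which is exactly what lets a single state serve all marginals at once.

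For the min-entropy bounds \eqref{eq:ccss}, fix $S \in \cK$ and abbreviate $\mu_S \eqdef 2^{-\entHmin^{\varepsilon,D}(S)_\rho}$. Using commutativity, reorder the composition so that $\bar{\cE}^S$ acts first, writing $\sigma = \cN\big(\bar{\cE}^S(\rho)\big)$ with $\cN$ the composition of the remaining operations. Since each factor of $\cN$ is density-operator decreasing (property~ii), we get $\sigma \le \bar{\cE}^S(\rho)$, and because the partial trace preserves the operator order, $\sigma_S \le \big(\bar{\cE}^S(\rho)\big)_S = \cE^S(\rho_S)$. By Lemma~\ref{lem:minEntropySmoothing} applied to $\rho_S$, the state $\cE^S(\rho_S)$ is precisely the trace-distance min-entropy smoother of $\rho_S$, so $\l_{\max}(\cE^S(\rho_S)) = \mu_S$; monotonicity of the largest eigenvalue then gives $\l_{\max}(\sigma_S) \le \mu_S$, i.e.\ $\entHmin(S)_\sigma \ge \entHmin^{\varepsilon,D}(S)_\rho$.

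For the distance bound \eqref{eq:comdist}, telescope along the chain: the triangle inequality gives $D(\rho,\sigma) \le \sum_{k=1}^n D(\tau_{k-1}, \bar{\cE}^{S_k}(\tau_{k-1}))$, and property~iii rewrites each summand as $D\big((\tau_{k-1})_{S_k}, \cE^{S_k}((\tau_{k-1})_{S_k})\big)$. The key point is that property~ii forces the monotone chain $\tau_{k-1} \le \rho$, hence $(\tau_{k-1})_{S_k} \le \rho_{S_k}$ eigenvalue-by-eigenvalue in the common classical basis; writing the eigenvalues of $\rho_{S_k}$ as $\{p_y\}$ and those of $(\tau_{k-1})_{S_k}$ as $\{q_y\}$, this reads $q_y \le p_y$. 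Since $\cE^{S_k}$ merely multiplies the eigenvalue indexed by $y$ by the factor $f^{S_k}(p_y) \in [0,1]$ determined by the \emph{original} marginal $\rho_{S_k}$, and ordered states satisfy \eqref{eq:gmtd}, each summand equals $\sum_y q_y\big(1 - f^{S_k}(p_y)\big) \le \sum_y p_y\big(1 - f^{S_k}(p_y)\big) = D(\rho_{S_k}, \cE^{S_k}(\rho_{S_k})) \le \varepsilon$, the last step again by Lemma~\ref{lem:minEntropySmoothing}. Summing over $k$ yields $D(\rho,\sigma) \le n\varepsilon = |\cK|\varepsilon$.

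The substantive part, and the step I expect to be the main obstacle, is the optimality claim. Here the plan is to exhibit a family of classical states $\rho^{(d)}$ on $A_1\cdots A_m$, indexed by a growing dimension $d$, for which every feasible simultaneous smoother must pay trace distance $(|\cK|-o(1))\varepsilon$. The design principle is to localise the unavoidable cost of each constraint to its own region of the support: partition the product basis into disjoint blocks $\{R_S\}_{S\in\cK}$ and arrange, inside $R_S$, that the coordinates on the systems of $S$ are frozen to a single value — producing one large eigenvalue in $\rho_S$ whose reduction to threshold costs $\approx\varepsilon$ — while the complementary coordinates are spread uniformly over a set of size growing with $d$, so that $R_S$ contributes only a flat, sub-threshold part to every other marginal $\rho_T$ with $T\ne S$. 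One then argues that the constraint $\l_{\max}(\sigma_S)\le\mu_S$ can only be satisfied by removing $(1-o(1))\varepsilon$ of mass from $R_S$, that this removal does not help any other constraint, and that because the blocks are disjoint the removed masses add, giving $D(\rho^{(d)},\sigma)\ge(|\cK|-o(1))\varepsilon$. Making the frozen-versus-spread trade-off quantitative so that the spikes for distinct subsets genuinely decouple in the large-$d$ limit, and verifying that no single cut can be shared between two constraints, is the delicate calculation that carries the optimality statement.
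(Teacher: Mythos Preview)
Your achievability argument is correct and essentially follows the paper. The min-entropy part is identical. For the distance bound you take a slightly different route: the paper telescopes, then uses monotonicity of $D$ under the common prefix $\bar{\cE}^{S^1}\circ\cdots\circ\bar{\cE}^{S^{i-1}}$ (invoking \eqref{eq:monpd}) to reduce each term to $D(\rho,\bar{\cE}^{S^i}(\rho))$, and only then applies property~iii) at $\tau=\rho$. You instead apply property~iii) directly at $\tau=\tau_{k-1}$ and then exploit $\tau_{k-1}\le\rho$ via an explicit eigenvalue estimate $\sum_y q_y(1-f^{S_k}(p_y))\le\sum_y p_y(1-f^{S_k}(p_y))$. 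Both are valid; the paper's version is marginally cleaner because it never needs to unpack the smoothing function, while yours makes the role of property~ii) in the distance bound more visible.

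On optimality your outline captures the right mechanism---disjoint blocks $R_S$ whose mass removal is forced by the $S$-constraint and useless for every other constraint---but the paper's explicit two-party construction \eqref{eq:ex} is organised differently from your ``freeze the $S$-coordinates'' template. In particular for $S=A_1A_2$ the paper places mass on a sparse antidiagonal (one nonzero every $n$ sites) rather than at a single frozen point, and the $h^{A_1},h^{A_2}$ lines sit on the central cross; the separation-of-scales is enforced by the densities $1/(2n)$ versus $1/(2n^2)$. Your scheme as written also needs care when $T\subsetneq S$: freezing all $S$-coordinates freezes the $T$-coordinates too, so $R_S$ could contribute a spike to $\rho_T$ and mass removed there could help the $T$-constraint; you would need to place the frozen points so their $T$-projections avoid the $T$-spike and land below threshold. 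This is fixable, but it is exactly the ``delicate calculation'' you flag, and the paper itself only carries it out for $m=2$, deferring the general $m$ to \cite{Dre13}.
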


\begin{proof}
Let $(S^i)_{1 \leq i \leq |\cK|}$ be an arbitrary ordering of the set $\cK$. Define the iteratively smoothed state
\[
\sigma \eqdef \bar{\cE}^{S^1} \circ \cdots \circ \bar{\cE}^{S^{|\cK|}}(\rho) \in \cS_{\leq}^{cl}(A).
\]
Since property ii) Lemma \ref{lem:classicalCase} carries over to any concatenation of the maps $\{ \bar{\cE}^{S} \}_{S \in \cK}$ it follows that
\begin{displaymath}
\sigma \leq \bar{\cE}^{S^1} \circ \cdots \circ \bar{\cE}^{S^{i}}(\rho) \leq \bar{\cE}^{S^{i}}(\rho)
\end{displaymath}
using complete positivity of $\cE^{S}$, $\forall S \in \cK$, in the first step. This relation inherits to the subsystem $S^i$ under the partial trace, where it becomes $\sigma_{S^i} \leq \cE^{S^i}(\rho_{S^i})$, thus implying (\ref{eq:ccss}).

To bound the distance we successively apply the triangle inequality,
\begin{displaymath}
\begin{split}
D(\rho,\sigma) &\leq \sum\limits_{i=1}^{|\cK|} D(\bar{\cE}^{S^{1}} \circ \cdots \circ \bar{\cE}^{S^{i-1}} (\rho), \bar{\cE}^{S^{1}} \circ \cdots \circ \bar{\cE}^{S^{i}} (\rho)) \\ &\overset{\eqref{eq:monpd}}{\leq} \sum\limits_{i=1}^{|\cK|} D(\rho, \bar{\cE}^{S^{i}} (\rho)) \overset{\text{iii)}}{=} \sum\limits_{i=1}^{|\cK|} D(\rho_{S^i}, \cE^{S^{i}} (\rho_{S^i})) \leq |\cK| \ve
\end{split}
\end{displaymath}
where we have used Lemma \ref{lem:classicalCase}, iii), in the third step.


We prove that the bound (\ref{eq:comdist}) is optimal for two parties. The general case can be found in \cite{Dre13}. Let the parties $A_1$, $A_2$ have equal dimension, $d_{A_1} = d_{A_2} = 2 n^2 + 1$ for $n \in \mathbb{N}$.  Define a state $p$ on the register $\{1,\dotsc,d_{A_1}\} \times \{1,\dotsc,d_{A_2}\}$ by the probability distribution
\begin{equation} p = \left( \begin{array}{ccccccc}
                    &        &            & \frac{f_{A_2}}{2n^2} &  	           &                &          \iddots \\
                    &        &            &        \vdots       &              &  \frac{f_{A_1A_2}}{2n} &  \\
                    &        &            & \frac{f_{A_2}}{2n^2} &  \iddots     &       & \\
\frac{f_{A_1}}{2n^2} & \ldots & \frac{f_{A_1}}{2n^2}  & 0               & \frac{f_1}{2n^2} & \ldots & \frac{f_1}{2n^2} \\
                    &        &   \iddots  & \frac{f_{A_2}}{2n^2} &              &                 &                 \\
          & \frac{f_{A_1A_2}}{2n} &    &   \vdots       &  	           &                 &                 \\
      \iddots       &        &            & \frac{f_{A_2}}{2n^2} &              &                 &                 \\
   \end{array} \right),
\label{eq:ex}
\end{equation}
where only every $n$-th entry on the diagonal is occupied by $\frac{f_{A_1A_2}}{2n}$. All blank entries are set to $0$. Let $\cK \subset \{ A_1, A_2, A_1 A_2 \}$. For $S \in \cK$ define $f_S \eqdef \frac{1}{|\cK|}$, else set $f_S \eqdef 0$.

\begin{claim}
For every $\ve < \frac{1}{|\cK|}$ there exists $n_0 \in \mathbb{N}$ so that $\forall n \geq n_0$ any classical state $q$ on $A_1A_2$ with
\begin{equation}
\entHmin(S)_{q} \geq \entHmin^{\varepsilon,D}(S)_{p} \text{ } \forall S \in \cK
\label{eq:clex}
\end{equation}
satisfies $D(p,q) \geq |\cK| \ve$.
\end{claim}

To prove this claim we denote the horizontal non-zero line in (\ref{eq:ex}) by $h^{A_1}$, the vertical non-zero line by $h^{A_2}$ and the non-zero diagonal by $h^{A_1 A_2}$.
Computing the marginals to
\begin{displaymath}
(p_{A_j})_i \left\{ \begin{array}{ll} = f_{A_j} & \text{if } i = n^{2} \\ \leq \frac{f_{A_1A_2}}{2n}+\frac{f_{A_{\{1,2\} \setminus \{ j \}}}}{2n^2} & \text{else.}\end{array} \right.
\end{displaymath}
we observe that the entries of $p_S$, $S \in \cK$, coming from $h^{S}$ dominate all others by order $n$. Hence, for any $\ve < \frac{1}{|\cK|}$ there exists an $n_0$ so that $\forall n \geq n_0$ a probability weight of at least $\ve$ has to be removed from $h^S$ in order to smooth $p$ on $S$. Since the only common entry of the sets $\{ h^S \}_{S \in \cK}$ has probability $0$ the claim follows.
\end{proof}

The construction of $p$ in (\ref{eq:ex}) can be naturally generalized to $m$ parties. The probability distribution $p$ is then defined on the discrete $m$-cube. The discrete lines $h^S$ are replaced by discrete hyperplanes, each lying orthogonal to the main diagonal of the subspace associated to the subsystem $S$. The density of non-zero entries on these hyperplanes decreases exponentially in the number of parties in subsystem $S$. The calculations are somewhat more involved and can be found in \cite{Dre13}. 



Choosing $\cK = 2^{\{A_1,...A_m\}} \setminus \{ \emptyset \}$ Theorem \ref{thm:classicalCase} proves conjecture \ref{conj:ssc} for classical states with a trace distance bound of $(2^m-1)\ve$ that is optimal (for trace distance smoothing) as shown by the distribution described in the previous paragraph.
In fact, a modified version of Lemma \ref{lem:classicalCase}, where ii) and iii) only hold for $\tau = \rho$, applies to any state $\rho \in \cS_{\leq}(A)$ that satisfies the commutation relations
\begin{equation}
[\rho_{S} \otimes \1_{S^c},\rho_{T} \otimes \1_{T^c}] = 0 \text{ } \forall S, T \in \cK.
\label{eq:commmutationRelations}
\end{equation}
This is sufficient to prove Theorem \ref{thm:classicalCase} \cite{Dre13}. As a non-classical example that satisfies (\ref{eq:commmutationRelations}) consider a bipartite entangled pure state $\Ket{\psi} =  \sum_{j=1}^{d} \frac{1}{\sqrt{d}} \Ket{j}_{A_1} \otimes \Ket{j}_{A_2}$. 

Finally, considering simultaneous smoothing in the purified distance, we remark that the state $p$ \eqref{eq:ex} and its generalizations to $m$ parties for non-singleton $\cK \subset 2^{\{A_1,...A_m\}} \setminus \{ \emptyset \}$ have a closest simultaneous $\entHmin$-smoother $q$ that satisfies
\[
P(p,q) = \sqrt[4]{|\cK|-1} \sqrt{2 \ve} + \cO(\ve)
\]
in the limit $(\ve \to 0)$ \cite{Dre13}. This shows that a square-root dependence in $\ve$ is unavoidable when simultaneously smoothing in the purified distance.

\section{Quantum case}

We start by analyzing the differences of the quantum case to the classical setting. Focussing on property ii) in Lemma \ref{lem:classicalCase} we may ask: for any $\rho \in \cS_{\leq}(A)$, does there exist a close state $\sigma \leq \rho$ with $\entHmin(S)_{\sigma} \geq  \entHmin^{\ve,D}(S)_{\rho}$? The existence of such a state would immediately yield a proof for the quantum case of Conjecture \ref{conj:ssc} by the fact that the smooth min-entropy is monotonous in the positive semidefinite ordering on $\cS_{\leq}(A)$. It turns out, however, that in general the answer is negative. As a counterexample consider a pure state $\rho$, so that one of its marginals $\rho_S$ is almost fully mixed with the exception of one eigenvalue, which is $\ve$ larger than all others. The state $\sigma$ by $\sigma \leq \rho$ must then be a multiple of $\rho$, the best possible proportionality factor being $2^{-(\entHmin^{\ve}(S)_{\rho} - \entHmin(S)_{\rho})}$, which tends to $0$ as $(d_S \to \infty)$.

Returning to the quantum evolution perspective, the extended smoothing operations $\{ \bar{\cE}^S \eqdef \cE^S \otimes \id_{S^c} \}_{S \in \cK}$, where $\cE^S$ is defined as in \eqref{eq:smoothingChannelSubsystem}, will in general not satisfy Lemma \ref{lem:classicalCase}, iii). Instead they satisfy the same property in the purified distance. 
\begin{lemma}
\label{lem:purified}
Let $\tau \in \mathcal{S}_{\leq}(A_1A_2)$, $\Pi^{A_1} \in \mathcal{P}(A_1)$, $\Pi^{A_1} \leq \1_{A_1}$, such that $[\Pi^{A_1},\tau_{A_1}] = 0$. Then
\begin{equation}
P(\tau, \Pi^{A_1} \tau \Pi^{A_1}) = P(\tau_{A_1}, \Pi^{A_1} \tau_{A_1} \Pi^{A_1}).
\label{eq:purified}
\end{equation}
\end{lemma}
\begin{proof}
The inequality ``$\geq$" follows by the monotonicity property of the purified distance (\ref{eq:monpd}) under the TPCPM $\tr_{S^c}$. \cite{mt} To derive the other inequality we use Uhlmann's Theorem for the fidelity \cite{uhlmann}. Let $\Ket{\psi} \in A_1 A_2 R$ be a purification of $\tau$, then
\begin{displaymath}
\begin{split}
\| \sqrt{\tau} \sqrt{\Pi^{A_1} \tau \Pi^{A_1}} \|_1 &\geq \Bra{\psi} \Pi^{A_1} \Ket{\psi} \\ &= \tr(\Pi^{A_1} \tau_{A_1}) \\ & \overset{[\Pi^{A_1},\tau_{A_1}] = 0}{=} \|\sqrt{\tau_{A_1}} \sqrt{\Pi^{A_1} \tau_{A_1} \Pi^{A_1}} \|_1
\end{split}
\end{displaymath}
where in the first line it was used that $\Pi^{A_1} \Ket{\psi}$ is a purification of $\Pi^{A_1} \tau \Pi^{A_1}$. As $\tr(\Pi^{A_1} \tau \Pi^{A_1}) = \tr(\Pi^{A_1} \tau_{A_1} \Pi^{A_1})$ and $\tr (\tau) = \tr (\tau_{A_1})$ we conclude
\begin{displaymath}
F(\tau, \Pi^{A_1} \tau \Pi^{A_1}) = F(\tau_{A_1}, \Pi^{A_1} \tau_{A_1} \Pi^{A_1}).
\end{displaymath}
\end{proof} 


Using this Lemma, we show that the construction from the previous chapter can be transferred to the quantum setting yielding a proof of Conjecture \ref{conj:ssc} for two parties.

\subsection{Two parties}

We note that the multiparty typicality conjecture, which is the special case when $\rho$ is a tensor power state, was proved in \cite{Dut11} and subsequently in \cite{Noe12} for two parties. We provide here a proof in the more general one-shot setting which is hopefully more transparent.
\begin{theorem}[Quantum case of conjecture \ref{conj:ssc} for two parties]
\label{thm:quant2p}
Let $\rho \in \cS_{\leq}(A_1A_2)$, $\cK \subset \{ A_1, A_2, A_1 A_2 \}$, $\varepsilon > 0$. There exists $\sigma \in \cS_{\leq}(A_1A_2)$ such that
\begin{eqnarray}
\entHmin(S)_{\sigma} &\geq &\entHmin^{\varepsilon}(S)_{\rho} \text{ } \forall S \in \cK, \label{eq:2pss}\\
P(\rho,\sigma) &\leq & |\cK| \sqrt{2 \varepsilon}. \label{eq:2pdist}
\end{eqnarray}
\end{theorem}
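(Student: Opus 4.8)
The plan is to mirror the classical construction of Theorem \ref{thm:classicalCase}: define $\sigma$ as an iterated application of the extended single-marginal smoothers $\bar{\cE}^S$, $S \in \cK$, and bound $P(\rho,\sigma)$ by the triangle inequality. The distance estimate is essentially free and order-independent. Writing $\sigma = \bar{\cE}^{S^1}\circ\cdots\circ\bar{\cE}^{S^{|\cK|}}(\rho)$, the triangle inequality together with the non-expansiveness \eqref{eq:monpd} of the prefix channels $\bar{\cE}^{S^1}\circ\cdots\circ\bar{\cE}^{S^{i-1}}$ reduces $P(\rho,\sigma)$ to $\sum_i P(\rho, \bar{\cE}^{S^i}(\rho))$. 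Since the operator $\sqrt{f^{S^i}}(\rho_{S^i})$ is a function of $\rho_{S^i}$ it commutes with $\rho_{S^i}$, so Lemma \ref{lem:purified} applies with $\tau = \rho$ and $\Pi^{A_1}$ replaced by $\sqrt{f^{S^i}}(\rho_{S^i})$, giving $P(\rho,\bar{\cE}^{S^i}(\rho)) = P(\rho_{S^i},\cE^{S^i}(\rho_{S^i})) \le \sqrt{2\varepsilon}$. This yields \eqref{eq:2pdist} for any ordering, so we are free to choose the ordering most convenient for the entropy bounds.

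The real work is the simultaneous min-entropy condition \eqref{eq:2pss}, where the ordering matters. The plan is to apply the global smoother $\bar{\cE}^{A_1A_2}$ innermost and the single-party smoothers afterwards, i.e. $\sigma = \bar{\cE}^{A_1}\circ\bar{\cE}^{A_2}\circ\bar{\cE}^{A_1A_2}(\rho)$. Three elementary facts drive the verification. First, because $\sqrt{f^{A_1A_2}}(\rho)$ commutes with $\rho$, the innermost step obeys the genuine operator inequality $\bar{\cE}^{A_1A_2}(\rho) = f^{A_1A_2}(\rho)\,\rho \le \rho$, with full spectrum already capped at $2^{-\entHmin^{\varepsilon,D}(A_1A_2)_\rho}$. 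Second, since every $\bar{\cE}^S$ has the form $M_S\,\cdot\,M_S$ with $0 \le M_S \le \1$, it can only decrease the largest eigenvalue of the global operator, so the global cap survives all later steps. Third, and crucially, the non-overlap of $A_1$ and $A_2$ gives the marginal monotonicity $\bar{\cE}^{A_2}(\tau)_{A_1} \le \tau_{A_1}$: indeed $\bar{\cE}^{A_2}(\tau)_{A_1} = \tr_{A_2}\!\big(\tau\,(\1_{A_1}\otimes f^{A_2}(\rho_{A_2}))\big) \le \tr_{A_2}(\tau) = \tau_{A_1}$, using $\1_{A_1}\otimes f^{A_2}(\rho_{A_2}) \le \1$ and $\tau \ge 0$ (symmetrically with the roles exchanged).

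With these facts the conditions follow by propagating operator inequalities on the marginals. Tracing $\omega_0 = \bar{\cE}^{A_1A_2}(\rho)$, $\omega_1 = \bar{\cE}^{A_2}(\omega_0)$, $\sigma = \bar{\cE}^{A_1}(\omega_1)$: the identity $\bar{\cE}^S(\tau)_S = \cE^S(\tau_S) = \sqrt{f^S}(\rho_S)\,\tau_S\,\sqrt{f^S}(\rho_S)$, combined with $(\omega_0)_{A_1}\le\rho_{A_1}$ and $(\omega_0)_{A_2}\le\rho_{A_2}$ (first fact) and the conjugation-monotonicity $X \le Y \Rightarrow \cE^S(X) \le \cE^S(Y)$, shows that each single-party marginal, at the moment its own smoother is applied, is dominated by $\rho_S$ and hence capped at $2^{-\entHmin^{\varepsilon,D}(S)_\rho}$; the marginal-monotonicity fact then guarantees the cap is not undone by the remaining single-party step, while the second fact preserves the global cap throughout. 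Since $\entHmin^{\varepsilon,P} \le \entHmin^{\varepsilon,D}$, these trace-distance caps imply \eqref{eq:2pss}. A smaller $\cK$ is handled by the same chain with the absent maps omitted.

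I expect the main obstacle to be exactly the interaction between the global smoother and the single-party marginals: the quantum analogue of the classical density-operator-decreasing property (Lemma \ref{lem:classicalCase} ii)) fails, so one cannot simply assert $\sigma \le \bar{\cE}^S(\rho)$ globally and descend to the marginal. What rescues the two-party case is the combination of placing the global step first, where its smoother commutes with $\rho$ and therefore really decreases $\rho$ in the operator order, and the non-overlap of $A_1,A_2$, which makes each single-party smoother act as the identity on the complementary system and hence non-increasing on the other party's marginal. This is presumably the precise point that resists $m \ge 3$ with overlapping marginals, where neither a clean operator ordering nor the marginal-monotonicity fact is available for the intermediate multi-party (but non-global) smoothers.
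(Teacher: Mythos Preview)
Your proposal is correct and follows essentially the same route as the paper: the same iterated construction $\sigma = \bar{\cE}^{A_1}\circ\bar{\cE}^{A_2}\circ\bar{\cE}^{A_1A_2}(\rho)$, the same use of Lemma~\ref{lem:purified} together with the triangle inequality and \eqref{eq:monpd} for the distance bound, and the same combination of $\bar{\cE}^{A_1A_2}(\rho)\le\rho$, the marginal-monotonicity fact (the paper's Lemma~\ref{lem:principleOfLocality}), and submultiplicativity of $\|\cdot\|_\infty$ for the entropy bounds. The only cosmetic difference is that for the $A_2$-marginal the paper invokes commutativity of $\bar{\cE}^{A_1}$ and $\bar{\cE}^{A_2}$ to rerun the $A_1$ argument with roles swapped, whereas you apply the marginal-monotonicity fact directly in the other direction; both are equivalent.
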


The proof requires the following basic lemma.

\begin{lemma}
Let $\rho \in \cS_{\leq}(A_1 A_2)$, $\cE_{A_2 \to A_2}$ be a quantum evolution on $A_2$. Then,
$(\id_{A_1} \otimes \cE_{A_2 \to A_2}(\rho))_{A_1} \leq \rho_{A_1}$.
\label{lem:principleOfLocality}
\end{lemma}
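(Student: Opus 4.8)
The plan is to reduce this operator inequality to a statement about the partial trace of a manifestly positive semidefinite operator, by working directly with a Kraus decomposition of the local evolution. Since $\cE_{A_2 \to A_2}$ is a trace non-increasing completely positive map, I would first write it in Kraus form $\cE(X) = \sum_k E_k X E_k^\dagger$, where the $E_k$ act on $A_2$ and the trace non-increasing condition is precisely $\sum_k E_k^\dagger E_k \leq \1_{A_2}$. This inequality is the only quantitative input we need about $\cE$.

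Next I would compute the $A_1$-marginal of $\sigma \eqdef (\id_{A_1} \otimes \cE)(\rho)$ explicitly. Writing $\sigma = \sum_k (\1_{A_1} \otimes E_k)\, \rho\, (\1_{A_1} \otimes E_k^\dagger)$ and applying $\tr_{A_2}$, I would invoke the cyclicity of the partial trace with respect to operators supported on $A_2$, namely $\tr_{A_2}[(\1_{A_1} \otimes B) M] = \tr_{A_2}[M (\1_{A_1} \otimes B)]$ for any $B$ on $A_2$. This collapses the sum over $k$ into a single operator and yields $\sigma_{A_1} = \tr_{A_2}[\rho\,(\1_{A_1} \otimes F)]$ with $F \eqdef \sum_k E_k^\dagger E_k$ satisfying $0 \leq F \leq \1_{A_2}$.

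It then remains to show $\rho_{A_1} - \sigma_{A_1} = \tr_{A_2}[\rho\,(\1_{A_1} \otimes (\1_{A_2} - F))] \geq 0$. Here lies the one subtle point I expect to be the main obstacle: although $\1_{A_2} - F \geq 0$ and $\rho \geq 0$, the product $\rho\,(\1_{A_1} \otimes (\1_{A_2} - F))$ is in general \emph{not} positive semidefinite because the two factors need not commute, so positivity of its partial trace cannot be read off directly. The resolution is to split the positive factor as $\1_{A_2} - F = \sqrt{\1_{A_2} - F}\,\sqrt{\1_{A_2} - F}$ and again use cyclicity of $\tr_{A_2}$ to rewrite the expression as $\tr_{A_2}[(\1_{A_1} \otimes \sqrt{\1_{A_2}-F})\,\rho\,(\1_{A_1} \otimes \sqrt{\1_{A_2}-F})]$. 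The argument is now of the form $X \rho X^\dagger \geq 0$, and since the partial trace preserves positive semidefiniteness, the claim follows.

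A fully equivalent and perhaps cleaner route avoids Kraus operators by passing to the Heisenberg picture: testing against an arbitrary $\ket{\phi} \in A_1$ gives $\bra{\phi}\sigma_{A_1}\ket{\phi} = \tr[\rho\,(\proj{\phi} \otimes \cE^*(\1_{A_2}))]$, where $\cE^*$ is the adjoint map and the trace non-increasing property is exactly $\cE^*(\1_{A_2}) \leq \1_{A_2}$; the claim then reduces to $\tr[\rho\,(\proj{\phi} \otimes (\1_{A_2} - \cE^*(\1_{A_2})))] \geq 0$, which is the trace of a product of two positive operators. In either route the only genuine difficulty is the non-commutativity flagged above, handled by the cyclicity of the partial trace, and everything else is routine.
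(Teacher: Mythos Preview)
Your proof is correct and follows exactly the route the paper indicates: the paper omits the details but states that the lemma is ``essentially a consequence of the cyclicity of the partial trace in operators acting only on the system traced out,'' which is precisely the mechanism you use (both in the Kraus picture and in the Heisenberg-picture variant). There is nothing to add.
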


We omit the proof of this basic fact, being essentially a consequence of the cyclicity of the partial trace in operators acting only on the system traced out.


\begin{IEEEproof}[Proof of Theorem \ref{thm:quant2p}]
We define $\cE^{S}$ as in (\ref{eq:smoothingChannelSubsystem}) for $S \in \cK$, $\cE^{S} \eqdef \id_S$ else, and $\bar{\cE}^{S} \eqdef \cE^{S} \otimes \id_{S^c}$. Choose the order $(S^1,S^2,S^3) = (A_1,A_2,A_1A_2)$. Define
\begin{displaymath}
\sigma \eqdef \cE^{A_1} \circ \cE^{A_2} \circ \cE^{A_1 A_2}(\rho).
\end{displaymath} 
This state has the right min-entropies (\ref{eq:2pss}):
\begin{itemize}
\item On the total system $A_1 A_2$ we can apply the submultiplicativity of $\| \cdot \|_{\infty}$:
\begin{displaymath}
\| \sigma \|_{\infty} \leq \underbrace{\| \sqrt{f_{\ve}^{A_1}} \|_{\infty}^2}_{\leq 1} \underbrace{\| \sqrt{f_{\ve}^{A_2}} \|_{\infty}^2}_{\leq 1} \| \bar{\cE}^{A_1 A_2}(\rho) \|_{\infty}
\end{displaymath}
since $f_{\ve}^{A_i} \leq 1$, $i=1,2$.
\item On subsystem $A_1$ we have
\begin{displaymath}
\begin{split}
\sigma_{A_1} &= \cE^{A_1} \circ \tr_{A_2}(\bar{\cE}^{A_2} \circ \underbrace{ \bar{\cE}^{A_1A_2}(\rho) }_{\leq \rho}) \\ &\leq \cE^{A_1} \circ \underbrace{\tr_{A_2}(\bar{\cE}^{A_2} \rho)}_{\leq \rho_{A_1}} \leq \cE^{A_1}(\rho_{A_1})
\end{split}
\end{displaymath}
using Lemma \ref{lem:principleOfLocality} in the last step. Since $\bar{\cE}^{A_1}$ and $\bar{\cE}^{A_2}$ commute the same argument applies on $A_2$.
\end{itemize} 
The distance part is entirely analogous to the classical case (cf. proof of Theorem \ref{thm:classicalCase}). The only difference is that $P$ is used here throughout instead of $D$. Accordingly, Lemma \ref{lem:purified} substitutes Lemma \ref{lem:classicalCase}, iii). Recalling that $\cE^{S^i}$ was designed to smooth in the trace distance, in the last step we use \eqref{eq:trace-purified} to obtain a bound on the purified distance, $P(\rho_{S^i},\cE^{S^i}(\rho_{S^i})) \leq \sqrt{2\ve}$.
\end{IEEEproof}

\subsection{Non-overlapping subsystems}

The proof of Theorem \ref{thm:quant2p} can be generalized to an $m$-party system $A = A_1 \cdots A_m$ where the subsystems $\cK$ under consideration can be ordered with respect to the inclusion. 
The subsystems are then iteratively smoothed according to such an order starting with the largest system.
Due to space limitations, we omit the proof of this result here.

\begin{theorem}[Quantum case of conjecture \ref{conj:ssc} for non-overlapping subsystems]
\label{thm:non-overlapping}
Let $\rho \in \cS_{\leq}(A)$, $\varepsilon > 0$. Let $\cK \subset 2^{\{A_1,\dotsc,A_m\}} \setminus \{ \emptyset \}$ be such that
\begin{equation}
\forall S, T \in \cK: (S \subset T)  \vee (T \subset S) \vee (S \cap T = \emptyset).
\label{eq:non-overlapping}
\end{equation}
There exists a state $\sigma$ that satisfies
$
\entHmin(S)_{\sigma} \geq \entHmin^{\varepsilon}(S)_{\rho} \text{ } \forall S \in \cK$
and 
$
P(\rho,\sigma) \leq |\cK| \sqrt{2 \varepsilon}.$
\end{theorem}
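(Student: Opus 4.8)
The plan is to mirror the construction of Theorem~\ref{thm:quant2p}, using the non-overlapping hypothesis~\eqref{eq:non-overlapping} to replace the special role that the full system $A_1A_2$ played there. I would fix a linear extension $\cK=\{S^1,\dots,S^N\}$, $N=|\cK|$, of the inclusion order in which the index increases with size (so $S^i\subsetneq S^j$ forces $i<j$), and set
\[
\sigma\eqdef\bar{\cE}^{S^1}\circ\cdots\circ\bar{\cE}^{S^N}(\rho),\qquad \bar{\cE}^S\eqdef\cE^S\otimes\id_{S^c},
\]
so that the largest subsystems are smoothed first (innermost). The distance bound is then obtained exactly as in the proofs of Theorems~\ref{thm:classicalCase} and~\ref{thm:quant2p}: telescoping with the triangle inequality and stripping the common outer prefix via non-expansiveness~\eqref{eq:monpd} gives $P(\rho,\sigma)\le\sum_{i=1}^N P(\rho,\bar{\cE}^{S^i}(\rho))$, and each term equals $P(\rho_{S^i},\cE^{S^i}(\rho_{S^i}))\le\sqrt{2\ve}$ by Lemma~\ref{lem:purified} together with~\eqref{eq:trace-purified}, yielding $P(\rho,\sigma)\le|\cK|\sqrt{2\ve}$.

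The substance is the min-entropy condition, which I would prove by induction on $|\cK|$, tracking the bound $\l_{\max}(\sigma_S)\le\l_{\max}(\cE^S(\rho_S))=2^{-\entHmin^{\ve,D}(S)_\rho}$ (whence $\entHmin(S)_\sigma\ge\entHmin^{\ve,D}(S)_\rho\ge\entHmin^{\ve}(S)_\rho$ by~\eqref{eq:trace-purified}). Peel off a \emph{minimal} element $S^1$ and write $\sigma=\bar{\cE}^{S^1}(\sigma')$ with $\sigma'$ the analogous smoother for the still-laminar family $\cK'=\cK\setminus\{S^1\}$. By~\eqref{eq:non-overlapping} and minimality, every $S^j\in\cK'$ either contains $S^1$ or is disjoint from it. For $j\ge2$ the induction hypothesis gives $\l_{\max}(\sigma'_{S^j})\le\l_{\max}(\cE^{S^j}(\rho_{S^j}))$, and applying $\bar{\cE}^{S^1}$ cannot increase $\l_{\max}(\sigma_{S^j})$: by submultiplicativity of $\|\cdot\|_\infty$ when $S^1\subseteq S^j$ (as $f_{\ve}^{S^1}\le1$), or by Lemma~\ref{lem:principleOfLocality} when $S^1\cap S^j=\emptyset$. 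For $S^1$ itself, $\bar{\cE}^{S^1}$ commutes with $\tr_{(S^1)^c}$, so $\sigma_{S^1}=\cE^{S^1}(\sigma'_{S^1})$, and it remains only to show $\sigma'_{S^1}\le\rho_{S^1}$, after which monotonicity of the completely positive map $\cE^{S^1}$ closes the induction.

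The heart of the argument is therefore the auxiliary claim: if $\cV$ is a laminar family in which every member is a proper superset of a fixed set $V$ or is disjoint from $V$, then $(\bar{\cE}^{\cV}(\rho))_V\le\rho_V$. I would prove this by a second induction on $|\cV|$. The decisive use of~\eqref{eq:non-overlapping} is that the supersets of $V$ in $\cV$ are pairwise nested (their intersections all contain $V\neq\emptyset$), hence form a chain with a unique maximal element $W$, which is in fact maximal in all of $\cV$; consequently every member with larger index than $W$ is disjoint from $W$ and commutes with $\bar{\cE}^W$, so $\bar{\cE}^W$ can be moved to the innermost position without altering $\sigma$. The remaining maps split into those supported on $W$ and those supported on $W^c$, which commute across the cut. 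Tracing out $W^c$ and invoking Lemma~\ref{lem:principleOfLocality} for the $W^c$-part leaves, on $W$, the operator $\Phi_W(\cE^W(\rho_W))$, where $\Phi_W$ collects the maps supported on $W$. The key algebraic fact is that smoothing acts diagonally on the original marginal, $\cE^W(\rho_W)=f^W_{\ve}(\rho_W)\,\rho_W\le\rho_W$ since $[\rho_W,f^W_{\ve}(\rho_W)]=0$; monotonicity of $\Phi_W$ gives $\le\Phi_W(\rho_W)$, and restricting to the system $W$ with genuine marginal $\rho_W$ reduces the statement to the strictly smaller family of members contained in $W$, to which the induction applies. Tracing down from $W$ to $V$ finishes the claim.

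The main obstacle is precisely this last claim, and it is where the quantum case departs from the classical one: the naive ``density-operator decreasing'' property $\bar{\cE}^T(\tau)\le\tau$ of Lemma~\ref{lem:classicalCase}(ii) is \emph{false} for general $\tau$ (as the pure-state counterexample preceding Lemma~\ref{lem:purified} shows), so one cannot simply discard the superset smoothings. What rescues the argument is that the decrease $\cE^T(\rho_T)\le\rho_T$ does hold on the original, diagonally compatible marginal, and that the non-overlapping hypothesis forces the recursion, after peeling the top superset, to continue on a genuine reduced state $\rho_W$ whose marginals still commute with their smoothing functions. I expect the delicate point to be the commutation bookkeeping---which maps may be permuted past which, and the verification that the two inductions run on a consistent size-increasing ordering---rather than any new analytic ingredient beyond Lemmas~\ref{lem:principleOfLocality} and~\ref{lem:purified}.
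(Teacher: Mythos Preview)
Your proposal is correct and follows precisely the approach the paper sketches (the proof is omitted there ``due to space limitations'', but the text specifies iterative smoothing in a size-respecting order starting with the largest subsystem, using the two-party argument as template). Your two nested inductions---peeling a minimal $S^1$ and, for the auxiliary bound $\sigma'_{S^1}\le\rho_{S^1}$, exploiting that the supersets of $S^1$ in a laminar family form a chain so the top one can be commuted innermost and reduced via $\cE^W(\rho_W)\le\rho_W$---fill in exactly the details one expects; the only omission is the trivial base case where $\cV$ contains no superset of $V$, which follows directly from Lemma~\ref{lem:principleOfLocality}.
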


%

Note that the smoothing operations $\bar{\cE}^S$ are rank non-increasing and thus $\sigma$ is pure if $\rho$ is. By the Schmidt-decomposition, it follows that if $\rho$ is pure for every pair of subsystems $S, S^c \in \cK$ the application of only one smoothing operation suffices to smooth them both. Therefore, Conjecture \ref{conj:ssc} is also satisfied for tripartite pure states.

\subsection{Three parties}

The last theorem also highlights the simplest case where current techniques fail. Consider a mixed tripartite state $\rho$, $\cK = \{A_1 A_2, A_2 A_3\}$, $\ve > 0$.
Then, proceeding in an iterative way similar to the proof of Theorem \ref{thm:quant2p}, we introduce an ordering $(S^1, S^2) = (A_1 A_2, A_2 A_3)$ and define
$
\sigma \eqdef \bar{\cE}^{A_1 A_2} \circ \bar{\cE}^{A_2 A_3} (\rho)
$
for smoothing quantum operations $\bar{\cE}^{A_1 A_2}$, $\bar{\cE}^{A_2 A_3}$. Ignoring the structure of $\bar{\cE}^{A_2 A_3}$ for the moment, we can always define $\bar{\cE}^{A_1 A_2}$ such that
$
\entHmin(A_1 A_2)_{\sigma} = \entHmin^{\varepsilon,D}(A_1 A_2)_{\bar{\cE}^{A_2 A_3} (\rho)}
$
provided that $\bar{\cE}^{A_2 A_3} (\rho)$ is known. But the question is precisely how to choose $\bar{\cE}^{A_2 A_3}$ so that the application of $\bar{\cE}^{A_1 A_2}$ does not affect the reduced state on $A_2 A_3$ too much?
%
\section{Conclusion}
In this note, we presented a simple formulation for a problem that appears as a bottleneck in the analysis of network quantum information processing tasks. We proved that the classical version of the problem can be solved as well as the quantum case when the systems under consideration satisfy a non-overlapping condition. Understanding overlapping marginals seems to be also a barrier in the context of the quantum marginal problem \cite{Kly06}. 

\section*{Acknowledgment}
The authors would like to thank Fr\'{e}d\'{e}ric Dupuis, Nicolas Dutil, Hamza Fawzi, Patrick Hayden, Renato Renner, Ivan Savov, Pranab Sen, Mark Wilde and Andreas Winter for helpful discussions. This research is supported by the European Research Council grant No. 258932.

\bibliographystyle{IEEEtranS}
\bibliography{IEEEabrv,qit}

\begin{thebibliography}{10}
\providecommand{\url}[1]{#1}
\csname url@samestyle\endcsname
\providecommand{\newblock}{\relax}
\providecommand{\bibinfo}[2]{#2}
\providecommand{\BIBentrySTDinterwordspacing}{\spaceskip=0pt\relax}
\providecommand{\BIBentryALTinterwordstretchfactor}{4}
\providecommand{\BIBentryALTinterwordspacing}{\spaceskip=\fontdimen2\font plus
\BIBentryALTinterwordstretchfactor\fontdimen3\font minus
  \fontdimen4\font\relax}
\providecommand{\BIBforeignlanguage}[2]{{%
\expandafter\ifx\csname l@#1\endcsname\relax
\typeout{** WARNING: IEEEtranS.bst: No hyphenation pattern has been}%
\typeout{** loaded for the language `#1'. Using the pattern for}%
\typeout{** the default language instead.}%
\else
\language=\csname l@#1\endcsname
\fi
#2}}
\providecommand{\BIBdecl}{\relax}
\BIBdecl

\bibitem{Dre13}
L.~Drescher, ``Simultaneous min-entropy smoothing on multiparty systems,''
  2013, {Semester project thesis}.

\bibitem{Dut11}
N.~Dutil, ``Multiparty quantum protocols for assisted entanglement
  distillation,'' Ph.D. dissertation, McGill University, 2011, arXiv:1105.4657.

\bibitem{FHSSW11}
O.~Fawzi, P.~Hayden, I.~Savov, P.~Sen, and M.~Wilde, ``Classical communication
  over a quantum interference channel,'' \emph{IEEE Trans. Inform. Theory},
  vol.~58, no.~6, pp. 3670--3691, 2012.

\bibitem{Kly06}
A.~Klyachko, ``Quantum marginal problem and n-representability,'' \emph{J.
  Phys. Conf. Ser.}, vol.~36, no.~1, p.~72, 2006.

\bibitem{NC00}
M.~Nielsen and I.~Chuang, \emph{Quantum computation and quantum
  information}.\hskip 1em plus 0.5em minus 0.4em\relax Cambridge University
  Press, 2000.

\bibitem{Noe12}
J.~Noetzel, ``A solution to two party typicality using representation theory of
  the symmetric group,'' 2012, arXiv:1209.5094.

\bibitem{Sen11}
P.~Sen, ``Achieving the {Han-Kobayashi} inner bound for the quantum
  interference channel by sequential decoding,'' 2011, arXiv:1109.0802.

\bibitem{mt}
M.~Tomamichel, ``A framework for non-asymptotic quantum information theory,''
  Ph.D. dissertation, ETH Zuerich, 2012, arXiv:1203.2142.

\bibitem{TCR09}
M.~Tomamichel, R.~Colbeck, and R.~Renner, ``A fully quantum asymptotic
  equipartition property,'' \emph{IEEE Trans. Inform. Theory}, vol.~55, no.~12,
  pp. 5840--5847, 2009.

\bibitem{TCR10}
------, ``Duality between smooth min- and max-entropies,'' \emph{IEEE Trans.
  Inform. Theory}, vol.~56, p. 4674, 2010, arXiv:0907.5238v2.

\bibitem{uhlmann}
A.~Uhlmann, ``The ``transition probability'' in the state space of a
  $*$-algebra,'' \emph{Rep. Math. Phys.}, vol.~9, no.~2, pp. 273--279, 1976.

\bibitem{Wil11}
M.~Wilde, \emph{From Classical to Quantum Shannon Theory}, 2011,
  arXiv:1106.1445.

\end{thebibliography}

\end{document}